\newtheorem{theorem}{Theorem}[section]
\theoremstyle{definition}
\newtheorem{definition}[theorem]{Definition}
\theoremstyle{remark}
\numberwithin{equation}{section}
\def\a{{\alpha}}
\def\g{{\gamma}}
\def\z{{\zeta}}
\def\l{{\lambda}}
\def\na{{\nabla}}
\def\l{{\lambda}}
\def\ttt{{\theta}}
\def\G{{\Gamma}}
\def\O{{\Omega}}
\def\D{{\Delta}}
\def\bb{{{\mathcal B}}}
\def\cc{{{\mathcal C}}}
\def\uu{{{\mathcal U}}}
\def\uu{{{\mathcal U}}}
\def\uuu{{{\bf u}}}
\def\vvv{{{\bf v}}}
\def\xxx{{{\bf x}}}
\def\qqq{{{\bf q}}}
\def\R{{{\bf R}^1}}
\def\RR{{{\bf R}^2}}
\def\RRR{{{\bf R}^3}}
\def\RN{{{\bf R}^N}}
\def\UUU{{{\bf U}}}
\def\9{{\ \hbox{in}\ \O}}
\def\1{{\ \hbox{on}\ \G_1}}
\def\2{{\ \hbox{on}\ \G_2}}
\def\3{{\ \hbox{on}\ \G_3}}
\def\0{{\ \hbox{on}\ \G}}
\def\pa{{\partial}}
\def\pp{{\parallel}}
\begin{document}

\title[Functional solutions in heat and mass transfer]{Functional solutions for problems of heat and mass transfer}
\author{Giovanni Cimatti}
\address{Department of Mathematics, Largo Bruno
  Pontecorvo 5, 56127 Pisa Italy}
\email{cimatti@dm.unipi.it}


\subjclass[2010]{34L99, 35J66}



\keywords{Mass and heat transfer, existence and uniqueness, two-point problem for O.D.E., systems of P.D.E in divergence form }

\begin{abstract}
We prove the existence and, in certain cases, the uniqueness of functional solutions for two boundary value problems of systems of P.D.E. in divergence form motivated by problems of heat and mass transfer. If $\cc_F$ and $\cc$ denote respectively the set of functional and classical solutions of these problems we settle, in simple cases, the question if $\cc_F=\cc$.
\end{abstract}

\maketitle

\section{Introduction}
In this paper we study two problems of heat and mass transfer modeled by systems of non-linear partial differential equations in divergence form with constant boundary conditions on different part of the boundary of an open and bounded subset of $\RRR$. In Sections 2 and 3 we treat the case of a purely molecular flow. In Section 4 we consider the macroscopic flow in a porous medium according to the Darcy's law with the conductivities depending on both temperature and concentration. We adopt the point of view of functional solutions \cite{GC}, \cite{GC1}. This permits to reformulate the boundary value problem with a non-standard one-dimensional two point problem for a system of O.D.E. coupled with a mixed problem for the laplacian. We present various results of existence and uniqueness of solutions.

\section{Purely molecular flow}
If in a fluid macroscopic motion is absent, the density of heat flow $\qqq_h$ and the density of mass flow $\qqq_m$ are related with the gradient of the temperature and the gradient of the concentration by the equations

\begin{equation}
\label{1_2}
\qqq_h=a_{11}\na u_1+a_{12}\na u_2,\quad \qqq_m=a_{21}\na u_1+a_{22}\na u_2,
\end{equation}
where $u_1$ denotes the temperature and $u_2$ the concentration. The kinetic coefficients $a_{ij}$ are assumed to depend on $u_1$ and $u_2$. No sources of mass or heat are present. Hence

\begin{equation}
\label{1_3}
\na\qqq_h=0,\quad \na\qqq_m=0.
\end{equation}
The flow takes place in a region represented by a subset $\O$ of $\RRR$. The boundary of $\O$ consists of three parts $\G_1$, $\G_2$ and $\G_3$. On $\G_1$ and $\G_3$ the temperature and the concentration are prescribed with different constant values. $\G_3$ is thermally insulated and no transport of mass occurs there. In view of (\ref{1_3}) and (\ref{1_2}) we obtain for the determination of $u_1(\xxx)$, $u_2(\xxx)$, $\xxx=(x_1,x_2,x_3)\in\O$ the non-linear boundary value problem \footnote{The constant temperature on $\G_1$ is taken as the zero value of an empirical scale of the temperature. The same is true for the concentration.}

\begin{equation}
\label{1_4}
\na\cdot\big[a_{11}(u_1,u_2)\na u_1+a_{12}(u_1,u_2)\na u_2\big]=0\quad\9
\end{equation}

\begin{equation}
\label{2_4}
u_1=0\quad\1,\quad\frac{\pa u_1}{\pa n}=0\quad\2,\quad u_1=u_1^*\quad\3
\end{equation}

\begin{equation}
\label{3_4}
\na\cdot\big[a_{21}(u_1,u_2)\na u_1+a_{22}(u_1,u_2)\na u_2\big]=0\quad\9
\end{equation}

\begin{equation}
\label{4_4}
u_2=0\quad\1,\quad\frac{\pa u_2}{\pa n}=0\quad\2,\quad u_2=u_2^*\quad\3.
\end{equation}
We assume the boundary of $\O$ so regular that the mixed problem for the laplacian

\begin{equation}
\label{2_3}
\D z=0\ \9,\quad z=0\ \1,\quad \frac{\pa z}{\pa n}=0\quad \2,\quad z=1\quad \3
\end{equation}
has one and only one classical solution. Let $\cc$ be the set of classical solutions of (\ref{1_4})-(\ref{4_4}). In this paper we are interested in a subset $\cc_F$ of $\cc$. The solutions in $\cc_F$ are defined as follows

\begin{definition}
Let $z(\xxx)$ be the solution of problem (\ref{2_3}). A solution $(u_1(\xxx), u_2(\xxx))\in\cc$ belongs to $\cc_F$ if two functions $U_1(z)$ and $U_2(z)$, both belonging to $C^1([0,1])$, exist such that

\begin{equation}
\label{1_5}
u_1(\xxx)=U_1(z(\xxx)),\quad u_2(\xxx)=U_2(z(\xxx)).
\end{equation}
\end{definition}
In the next Theorem we prove that all the functional solutions of the problem (\ref{1_4})-(\ref{4_4}) are known if the solution of problem (\ref{2_3}) is at our disposal and we can solve the non-standard two point problem

\begin{equation}
\label{1_6}
a_{11}(U_1(z),U_2(z))\frac{dU_1}{dz}+a_{12}(U_1(z),U_2(z))\frac{dU_2}{dz}=\g_1
\end{equation}

\begin{equation}
\label{2_6}
a_{21}(U_1(z),U_2(z))\frac{dU_1}{dz}+a_{22}(U_1(z),U_2(z))\frac{dU_2}{dz}=\g_2
\end{equation} 

\begin{equation}
\label{3_6}
U_1(0)=0,\quad U_1(1)=u_1^*,\quad U_2(0)=0,\quad U_2(1)=u_2^*,
\end{equation}
where $u_1^*$, $u_2^*$ are the constant boundary data of the problem (\ref{1_4})-(\ref{4_4}). In a sense the equations (\ref{1_6})-(\ref{3_6}) contain the non-linear part of problem (\ref{1_4})-(\ref{4_4}) and the linear problem (\ref{2_3}) reflects the geometric part.

\begin{theorem}
Let $(U_1(z),U_2(z),\g_1,\g_2)$ be a solution of (\ref{1_6})-(\ref{3_6}) and $z(\xxx)$ the solution of (\ref{2_3}). Define

\begin{equation}
\label{1_7}
u_1(\xxx)=U_1(z(\xxx)),\quad u_2(\xxx)=U_2(z(\xxx)).
\end{equation}
Then $(u_1(\xxx),u_2(\xxx))$ is a solution of the problem (\ref{1_4})-(\ref{4_4}). Vice-versa, let $(u_1(\xxx),u_2(\xxx))$ be a functional solution of (\ref{1_4})-(\ref{4_4}) and $U_1(z)$, $U_2(z)$ the two functions entering in the Definition 2.1. Then two constants $\g_1$ and $\g_2$ exist such that (\ref{1_6})-(\ref{3_6}) holds.
\end{theorem}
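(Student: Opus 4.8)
The plan is to prove the two directions of the equivalence by direct substitution and the chain rule, exploiting the fact that $z(\xxx)$ is harmonic.

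For the forward direction, I would start from the definitions $u_1(\xxx)=U_1(z(\xxx))$ and $u_2(\xxx)=U_2(z(\xxx))$ and compute the gradients by the chain rule, obtaining $\na u_i = U_i'(z)\na z$ for $i=1,2$. Substituting into the heat-flux expression gives
\begin{equation}
\label{proof_flux}
a_{11}(u_1,u_2)\na u_1+a_{12}(u_1,u_2)\na u_2=\Big[a_{11}(U_1,U_2)U_1'+a_{12}(U_1,U_2)U_2'\Big]\na z.
\end{equation}
The key observation is that, since $U_1,U_2$ satisfy the ordinary differential equation (\ref{1_6}), the bracketed scalar in (\ref{proof_flux}) equals the constant $\g_1$. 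Thus the heat flux is simply $\g_1\na z$, and its divergence is $\g_1\D z=0$ because $z$ solves (\ref{2_3}). The same computation with $a_{21},a_{22}$ and the constant $\g_2$ handles the mass equation (\ref{3_4}). The boundary conditions then follow immediately: on $\G_1$ we have $z=0$, so $u_1=U_1(0)=0$ and $u_2=U_2(0)=0$; on $\G_3$ we have $z=1$, so $u_1=U_1(1)=u_1^*$ and $u_2=U_2(1)=u_2^*$; and on $\G_2$ the Neumann condition $\pa z/\pa n=0$ yields $\pa u_i/\pa n=U_i'(z)\,\pa z/\pa n=0$.

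For the converse, I would assume $(u_1,u_2)$ is a functional solution, so $u_i(\xxx)=U_i(z(\xxx))$ with $U_i\in C^1([0,1])$. Running the same chain-rule computation backwards, equation (\ref{1_4}) becomes $\na\cdot\big[\,\P_1(z)\na z\,\big]=0$ where $\P_1(z)=a_{11}(U_1,U_2)U_1'+a_{12}(U_1,U_2)U_2'$. Expanding the divergence gives $\P_1'(z)\,\abs{\na z}^2+\P_1(z)\D z=0$, and since $\D z=0$ this reduces to $\P_1'(z)\abs{\na z}^2=0$. The point at which one must argue carefully is that $\abs{\na z}^2$ does not vanish identically, so $\P_1'(z)=0$ and hence $\P_1$ is a constant, which I name $\g_1$; this constant is exactly the one appearing in (\ref{1_6}). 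The identical argument applied to (\ref{3_4}) produces the constant $\g_2$ for (\ref{2_6}). The boundary data of problem (\ref{1_4})--(\ref{4_4}) then force (\ref{3_6}): evaluating $u_i$ where $z=0$ and where $z=1$ gives $U_i(0)=0$ and $U_1(1)=u_1^*$, $U_2(1)=u_2^*$.

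The main obstacle is the converse step of passing from $\P_1'(z)\abs{\na z}^2=0$ to $\P_1'\equiv 0$. One cannot simply divide by $\abs{\na z}^2$ pointwise because the gradient of a harmonic function may vanish on a set; the clean way is to observe that the level sets of $z$ foliate $\O$ for $z$ ranging over $(0,1)$ and that on each such regular level set $\abs{\na z}\neq 0$, so the identity forces $\P_1'(z)=0$ for every value $z$ attained with nonvanishing gradient, and continuity of $\P_1'$ together with the density of such values in $[0,1]$ yields $\P_1'\equiv 0$ throughout. I would phrase this via the fact that $z$ is nonconstant and harmonic, so $\abs{\na z}^2>0$ on a dense open subset of $\O$, which is enough to conclude that the continuous function $\P_1'(z(\xxx))$ vanishes on that dense set and hence everywhere.
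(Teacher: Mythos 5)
Your forward direction is correct and coincides with the paper's argument: chain rule, the bracket equals the constant $\g_1$ by (\ref{1_6}), the flux becomes $\g_1\na z$, and harmonicity of $z$ plus the boundary behavior of $z$ finish the claim. The converse, however, contains a genuine gap. Your key step expands $\na\cdot\big[\P_1(z)\na z\big]=\P_1'(z)|\na z|^2+\P_1(z)\D z$, where $\P_1(z)=a_{11}(U_1,U_2)U_1'+a_{12}(U_1,U_2)U_2'$. This presupposes that $\P_1$ is differentiable, but Definition 2.1 only grants $U_1,U_2\in C^1([0,1])$, and the coefficients $a_{ij}$ are not assumed differentiable either, so $\P_1$ is merely continuous and $\P_1'$ need not exist. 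Worse, your concluding step also needs $\P_1'$ to be \emph{continuous}: a function whose derivative exists and vanishes on a dense set need not be constant, so vanishing of $\P_1'(z(\xxx))$ on the dense open set where $\na z\ne 0$ only yields $\P_1\equiv\hbox{const}$ if $\P_1\in C^1$. In effect your argument requires $U_i\in C^2$ and $a_{ij}\in C^1$, strictly stronger hypotheses than the theorem carries. (Your density claim itself --- that the critical set of the nonconstant harmonic $z$ is nowhere dense, and that regular values are dense in $[0,1]$ --- is sound, but it does not rescue the differentiability issue.)

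The paper closes exactly this gap by integrating instead of differentiating. It introduces the primitive $\ttt_1(z)=\int_0^z\big[a_{11}(U_1(t),U_2(t))U_1'(t)+a_{12}(U_1(t),U_2(t))U_2'(t)\big]dt$, which is $C^1$ with $\ttt_1'=\P_1$ under the stated regularity. Then $\na\,\ttt_1(z(\xxx))$ equals the heat flux, so by (\ref{1_4}) the composite $\ttt_1(z(\xxx))$ is harmonic in $\O$, vanishes on $\G_1$, equals $\g_1:=\ttt_1(1)$ on $\G_3$, and has zero normal derivative on $\G_2$; by the uniqueness assumed for the mixed problem (\ref{2_3}) this forces $\ttt_1(z(\xxx))=\g_1 z(\xxx)$ in $\bar\O$. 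Since $z$ attains every value in $[0,1]$, the identity $\ttt_1(t)=\g_1 t$ holds on all of $[0,1]$, and differentiating once gives $\P_1\equiv\g_1$, i.e. (\ref{1_6}), with no appeal to $\P_1'$ or to the critical set of $z$. If you want to keep your differential route, you must either add the regularity $U_i\in C^2$, $a_{ij}\in C^1$ to the hypotheses, or replace the pointwise product-rule computation by the paper's integrated (Kirchhoff-type) argument.
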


\begin{proof}
 We have for the functions defined in (\ref{1_7}), $u_1=u_2=0\1$, $u_1=u_1^*\3$ and $u_2=u_2^*\3$. Moreover, $\frac{\pa u_1}{\pa n}=U'\frac{\pa z}{\pa n}=0\ \2$ and, similarly,  $\frac{\pa u_2}{\pa n}=0\ \2$. On the other hand, $\na u_1=U_1'(z)\na z$, $\na u_2=U_2'(z)\na z$. Substituting in the left hand side of (\ref{1_4}) we have, in view of (\ref{2_3}) and (\ref{1_7}),

\begin{equation*}
\na\cdot\big[a_{11}(u_1(\xxx),u_2(\xxx))\na u_1+a_{12}(u_1(\xxx),u_2(\xxx))\na u_2\big]=
\end{equation*}
\begin{equation*}
\na\big[a_{11}(U_1(z(\xxx)),U_2(z(\xxx))U_1'(z(\xxx))+a_{12}(U_1(z(\xxx)),U_2(z\xxx))U_2'(z(\xxx))\big]\cdot\na z.
\end{equation*}
From (\ref{2_3}) we have by the maximum principle \cite{PW}, $0\leq z(\xxx)\leq 1$. On the other hand, for all $z\in [0,1]$ (\ref{1_6}) is true. Hence

\begin{equation*}
\na\cdot\big[a_{11}(u_1(\xxx),u_2(\xxx))\na u_1+a_{12}(u_1(\xxx),u_2(\xxx))\na u_2\big]=0.
\end{equation*}
In the same way we prove (\ref{3_4}). Let now $(u_1(\xxx),u_2(\xxx))$ be a functional solution of (\ref{1_4})-(\ref{4_4}). Let $U_1(z)$, $U_2(z)$ and $z(\xxx)$ be the functions entering in the definition of functional solutions. We claim that there exist two constants $\g_1$ and $\g_2$ such that (\ref{1_6})-(\ref{3_6}) hold. (\ref{3_6}) follows immediately from the definitions involved. To prove (\ref{1_6}), let us define for $z\in[0,1]$ the function

\begin{equation}
\label{1_14}
\ttt_1(z)=\int_0^z[a_{11}(U_1(t),U_2(t))U'_1(t)+a_{12}(U_1(t),U_2(t))U'_2(t)]dt.
\end{equation}
We have, if $z$ is a function of $\xxx$,

\begin{equation*}
\na\ttt_1(z(\xxx))=\big[a_{11}(U_1(z(\xxx)),U_2(z(\xxx))U'_1(z(\xxx))+a_{12}(U_1(z(\xxx)),U_2(z(\xxx)))U'_2(z(\xxx))\big]\na z.
\end{equation*}
Hence we obtain,  by (\ref{2_3}) and (\ref{1_4})

\begin{equation*}
\D\ttt_1(z(\xxx))=0\quad\9.
\end{equation*}
If we define $\g_1=\ttt_1(1)$, $\ttt_1(z(\xxx))$ solves the problem

\begin{equation*}
\D\ttt_1(z(\xxx))=0\quad\9
\end{equation*}

\begin{equation*}
\ttt_1(z(\xxx))=0\quad\1,\quad\ttt_1(z(\xxx))=\g_1\quad\3
\end{equation*}

\begin{equation*}
\frac{\pa\ttt_1(z(\xxx))}{\pa n}=\ttt'(z(\xxx))\frac{\pa z}{\pa n}=0\quad\2.
\end{equation*}
Thus we have

\begin{equation}
\label{4_15}
\ttt_1(z(\xxx))=\g_1 z(\xxx)\quad\hbox{in}\quad \bar\O.
\end{equation}
Let $\tilde z\in [0,1]$. A point $\tilde \xxx$ in $\O$ certainly exists such that $z(\tilde\xxx)=\tilde z$. Thus from (\ref{4_15}) we conclude that for all $z\in[0,1]$

\begin{equation*}
\ttt_1(z)=\g_1 z.
\end{equation*}
From (\ref{1_14}) we have

\begin{equation*}
\g_1 z=\int_0^z[a_{11}(U_1(t),U_2(t))U'_1(t)+a_{12}(U_1(t),U_2(t))U'_2(t)]dt.
\end{equation*}
Therefore

\begin{equation*}
\g_1 =a_{11}(U_1(z),U_2(z))U'_1(z)+a_{12}(U_1(z),U_2(z))U'_2(z).
\end{equation*}
In the same way we find (\ref{2_6}).
\end{proof}

\section{``Small'' and ``large'' functional solutions }
In this Section we give a theorem of existence and uniqueness of small functional solutions of problem (\ref{1_4})-(\ref{4_4}) and a theorem of existence for solutions not necessarily small. We will find the small solutions with the corresponding small solutions of the two point problem 

\begin{equation}
\label{1_21}
a_{11}(U_1(z),U_2(z))\frac{dU_1}{dz}+a_{12}(U_1(z),U_2(z))\frac{dU_2}{dz}=\g_1
\end{equation}

\begin{equation}
\label{2_21}
a_{21}(U_1(z),U_2(z))\frac{dU_1}{dz}+a_{22}(U_1(z),U_2(z))\frac{dU_2}{dz}=\g_2
\end{equation} 

\begin{equation}
\label{3_21}
U_1(0)=0,\quad U_1(1)=u_1^*,\quad U_2(0)=0,\quad U_2(1)=u_2^*.
\end{equation}
We wish to prove that if $(u_1^*)^2+(u_2^*)^2$ is sufficiently small the problem (\ref{1_21})-(\ref{3_21}) has one and only one solution under the sole assumption

\begin{equation}
\label{6_21}
a_{11}(0,0)a_{22}(0,0)-a_{12}(0,0)a_{21}(0,0)\ne 0.
\end{equation}
To this end we use the inverse function theorem in Banach space which we quote below for the sake of completeness \cite{P}.

\begin{theorem}
Let $X$ and $Y$ be Banach spaces and $F(X)\in C^1(X,Y)$ a mapping from $X$ to $Y$. Assume $y^*=F(x^*),\ x^*\in X,\ y^*\in Y$ and assume the Frechet' s differential $F'(x^*)$ globally invertible as an application from $X$ to $Y$. Then there exist neighborhoods $U$ of $x^*$ and $V$ of $y^*$ such that $F:U\to V$ is an homeomorphism. Moreover $F^{-1}$ exists and $F^{-1}\in C^1(V,X)$ for all $y\in V$.
\end{theorem}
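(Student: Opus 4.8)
The plan is to recognize Theorem 3.1 as the classical inverse function theorem in Banach spaces and to prove it by reducing to a normalized situation and then invoking the contraction mapping principle. First I would replace $F$ by the auxiliary map $G(x)=[F'(x^*)]^{-1}\big(F(x^*+x)-y^*\big)$. Here $[F'(x^*)]^{-1}$ exists and is bounded: by hypothesis $F'(x^*)\colon X\to Y$ is a linear bijection, and by the open mapping (bounded inverse) theorem its inverse is automatically continuous. One checks that $G\in C^1$, that $G(0)=0$, and that $G'(0)=I$, the identity on $X$. Solving $F(x)=y$ for $x$ near $x^*$ is then equivalent to solving $G(\xi)=w$ for $\xi$ near $0$, where $w=[F'(x^*)]^{-1}(y-y^*)$.

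To solve $G(\xi)=w$ I would introduce the fixed-point map $\Phi_w(\xi)=\xi-G(\xi)+w$, whose fixed points are exactly the solutions. Since $G'$ is continuous with $G'(0)=I$, there is $r>0$ such that $\|G'(\xi)-I\|\le\tfrac12$ for $\|\xi\|\le r$, so on the convex ball $\bar B(0,r)$ the mean value inequality gives $\|\Phi_w(\xi_1)-\Phi_w(\xi_2)\|\le\tfrac12\|\xi_1-\xi_2\|$. Because $\Phi_w(0)=w$, the same bound yields $\|\Phi_w(\xi)\|\le\tfrac12\|\xi\|+\|w\|$, so choosing $\|w\|\le r/2$ makes $\Phi_w$ map $\bar B(0,r)$ into itself. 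The contraction principle then produces a unique fixed point $\xi=\phi(w)$. Undoing the reduction, this gives the neighborhoods $V$ of $y^*$ and $U$ of $x^*$, the bijectivity of $F\colon U\to V$, and the inverse $F^{-1}=\phi$ (up to the affine changes of variable). Subtracting the two fixed-point identities and using the contraction bound upgrades continuity to a Lipschitz estimate $\|\phi(w_1)-\phi(w_2)\|\le 2\|w_1-w_2\|$, so $F^{-1}$ is continuous and $F\colon U\to V$ is a homeomorphism.

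The main work, and the step I expect to be the real obstacle, is the $C^1$ regularity of $F^{-1}$. Here I would first use that the set of invertible operators is open in $\mathcal{L}(X,Y)$ and that inversion $A\mapsto A^{-1}$ is continuous there; since $F'$ is continuous and $F'(x^*)$ is invertible, $F'(x)$ remains invertible for $x$ in a possibly smaller neighborhood, to which I shrink $U$. Then I would show directly that $\phi$ is differentiable at each $y\in V$ with $\phi'(y)=[F'(\phi(y))]^{-1}$: writing $x=\phi(y)$ and $x+h=\phi(y+k)$, the relation $k=F(x+h)-F(x)=F'(x)h+o(\|h\|)$ together with the Lipschitz bound $\|h\|\le C\|k\|$ lets me invert to obtain $h=[F'(x)]^{-1}k+o(\|k\|)$, which is exactly differentiability with the claimed derivative. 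Finally, $y\mapsto\phi'(y)=[F'(\phi(y))]^{-1}$ is continuous as a composition of the continuous maps $\phi$, $F'$, and operator inversion, so $F^{-1}\in C^1(V,X)$, completing the proof.
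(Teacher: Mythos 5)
Your argument is correct, but there is nothing in the paper to compare it against: the paper does not prove this statement at all. Theorem 3.1 is the classical inverse function theorem in Banach spaces, quoted verbatim ``for the sake of completeness'' with a citation to Ambrosetti and Prodi \cite{P}; the paper's own work begins only afterwards, in applying the theorem to the two-point problem (\ref{1_21})--(\ref{3_21}). What you have written is the standard contraction-mapping proof of that classical result, and it is sound in all essentials: the normalization $G(\xi)=[F'(x^*)]^{-1}\bigl(F(x^*+\xi)-y^*\bigr)$, with boundedness of $[F'(x^*)]^{-1}$ supplied by the open mapping theorem (a point worth making explicit, as the statement only assumes algebraic invertibility); the fixed-point map $\Phi_w(\xi)=\xi-G(\xi)+w$ contracting with constant $\tfrac12$ on $\bar B(0,r)$ via the mean value inequality; the Lipschitz bound $\|\phi(w_1)-\phi(w_2)\|\le 2\|w_1-w_2\|$; and the derivative formula $\phi'(y)=[F'(\phi(y))]^{-1}$, where your conversion of $o(\|h\|)$ into $o(\|k\|)$ through $\|h\|\le C\|k\|$ is exactly the right device, and continuity of $y\mapsto\phi'(y)$ follows, as you say, from openness of the set of invertible operators and continuity of inversion in $\mathcal{L}(X,Y)$. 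The one step you gloss is the production of the \emph{open} neighborhood $U$: the contraction argument gives a solution map on a ball of data, and to exhibit $F:U\to V$ as a bijection of open sets you should take, say, $U=\{\xi\in B(0,r):\|G(\xi)\|<r/2\}$, open by continuity of $G$, with injectivity on $U$ following from uniqueness of the fixed point of $\Phi_w$ in all of $\bar B(0,r)$; this is routine bookkeeping and does not affect correctness. In short, your proposal supplies, correctly, the proof the paper delegates to the literature.
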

To use this Theorem we define the spaces

\begin{equation*}
X=\bigl(C^1([0,1])\bigl)^2\times\RR,\quad Y=\bigl(C^0([0,1])\bigl)^2\times\RR\times\RR
\end{equation*}
and the mapping $F:X\to Y$

\begin{equation*}
F((U_1,U_2),(\g_1,\g_2))=\bigl((a_{11}(U_1,U_2)U_1'+a_{12}(U_1,U_2)U_2'-\g_1,
\end{equation*}
\begin{equation*}
a_{21}(U_1,U_2)U_1'+a_{22}(U_1,U_2)U_2'-\g_2), (U_1(0),U_2(0)),(U_1(1),U_2(1))\bigl),
\end{equation*}
where $(U_1(z),U_2(z))\in\bigl(C^1([0,1])\bigl)^2$ and $(\g_1,\g_2)\in\RR$.
For $x^*=((0,0),(0,0))\in X$ and $y^*=((0,0),(0,0),(0,0))\in Y$, we have $F(x^*)=y^*$. The differential of $F$ computed in $x^*$ and evaluated in $((\uu_1,\uu_2),(\G_1,\G_2))\in X$ is

\begin{equation*}
F'(x^*)[(\uu_1,\uu_2),(\G_1,\G_2)]=
\end{equation*}
\begin{equation*}
\bigl((a_{11}(0,0)\uu_1'+a_{12}(0,0)\uu_2'-\G_1,\ a_{21}(0,0)\uu_1'+a_{22}(0,0)\uu_2'-\G_2)),\ (\uu_1(0),\uu_2(0)),(\uu_1(1),\uu_2(1))\bigl).
\end{equation*}
\vskip .4 cm

We claim that for every $f=((G_1(z),G_2(z)),\ (\eta_1,\eta_2),\ (\xi_1,\xi_2))\in Y$ with $G_1(z),\ G_2(z)\in C^0([0,1])$, $(\eta_1,\eta_2)\in \RR$ and $(\xi_1,\xi_2)\in \RR$ the equation

\begin{equation}
\label{2_26}
F'(x^*)[(\uu_1,\uu_2),(\G_1,\G_2)]=f
\end{equation}
has one and only one solution. In components (\ref{2_26}) can be rewritten

\begin{equation}
\label{3_26}
a_{11}(0,0)\uu_1'+a_{12}(0,0)\uu_2'-\G_1=G_1(z)
\end{equation}

\begin{equation}
\label{4_26}
a_{21}(0,0)\uu_1'+a_{22}(0,0)\uu_2'-\G_2=G_2(z)
\end{equation}

\begin{equation}
\label{5_26}
\uu_1(0)=\eta_1,\ \uu_2(0)=\eta_2
\end{equation}

\begin{equation}
\label{6_26}
 \uu_1(1)=\xi_1,\ \uu_2(1)=\xi_2.
\end{equation}

By (\ref{6_21}) we can solve (\ref{3_26}), (\ref{4_26}) with respect to $\uu_1'(z)$, $\uu_2'(z)$. We obtain

\begin{equation*}
\uu_1'(z)=\frac{1}{D}\big[\G_1a_{22}(0,0)-\G_2a_{12}(0,0)+a_{22}G_1(z)-a_{12}G_2(z)\big]
\end{equation*}

\begin{equation*}
\uu_2'(z)=\frac{1}{D}\big[\G_2a_{11}(0,0)-\G_1a_{21}(0,0)+a_{11}G_2(z)-a_{21}G_1(z)\big],
\end{equation*}
where $D=a_{11}(0,0)a_{22}(0,0)-a_{12}(0,0)a_{21}(0,0)$.
In view of (\ref{5_26}) we have

\begin{equation*}
\uu_1(z)=\frac{1}{D}\left[(\G_1a_{22}(0,0)-\G_2a_{12}(0,0))z+a_{22}\int_0^z G_1(t)dt-a_{12}\int_0^z G_2(t)dt\right]+\eta_1
\end{equation*}

\begin{equation*}
\uu_2(z)=\frac{1}{D}\left[(\G_2a_{11}(0,0)-\G_1a_{21}(0,0))z+a_{11}\int_0^z G_2(t)dt-a_{21}\int_0^z G_1(t)dt\right]+\eta_2.
\end{equation*}
The conditions (\ref{6_26}) become

\begin{equation}
\label{2_28}
\xi_1=\frac{1}{D}\left[\G_1a_{22}(0,0)-\G_2a_{12}(0,0)+a_{22}\int_0^1 G_1(t)dt-a_{12}\int_0^1 G_2(t)dt\right]+\eta_1
\end{equation}

\begin{equation}
\label{3_28}
\xi_2=\frac{1}{D}\left[\G_2a_{11}(0,0)-\G_1a_{21}(0,0)+a_{11}\int_0^1 G_2(t)dt-a_{21}\int_0^1 G_1(t)dt\right]+\eta_2.
\end{equation}
By (\ref{6_21}) the linear system (\ref{2_28}), (\ref{3_28}) can be solved with respect to $\G_1$, $\G_2$. Hence the inverse function theorem is applicable. Moreover, not only the two point problem (\ref{1_6})-(\ref{3_6}) has one and only one solution if $(u_1^*)^2+(u_2^*)^2$ is sufficiently small, but the same is true for the boundary value problem (\ref{1_4})-(\ref{4_4}) in the class of functional solutions.
\bigskip

A general question implied in the point of view of functional solution concerns when every classical solution is also a functional solution, i.e. if $\cc_F=\cc$. The answer is  certainly positive for the small solutions. For, we can apply the inverse function Theorem 3.1 directly to the problem (\ref{1_4})-(\ref{4_4}) and prove that there is a unique small solution. Another example in which $\cc_F=\cc$ is given in Section 4.
\bigskip

We prove now a theorem of existence for the problem (\ref{1_6})-(\ref{3_6}) which, in turn, will imply the existence of functional solutions for (\ref{1_4})-(\ref{4_4}).

\begin{theorem}
Let us assume  $a_{ij}(U_1,U_2)=a_{ji}(U_1,U_2)$ and, for all $\xi=(\xi_1,\xi_2)\in\RR$,

\begin{equation}
\label{1_35}
 M|\xi|^2\geq\sum_{ij=1}^2 a_{ij}(U_1,U_2)\xi_i\xi_j\geq m|\xi|^2
\end{equation}
where $M>0$ and $m>0$ are two constants not depending on $(U_1,U_2)$. Then the problem

\begin{equation}
\label{2_35}
 A(\UUU)\UUU'=\g, \quad A=(a_{ij}(\UUU)),\ \UUU=(U_1,U_2),\ \UUU'=(U_1',U_2'),\ \g=(\g_1,\g_2)
\end{equation}

\begin{equation}
\label{3_35}
 \UUU(0)=(0,0),\quad \UUU(1)=\uuu^*,\quad \uuu^*=(u_1^*,u_2^*)
\end{equation}
 has at least one solution.
\end{theorem}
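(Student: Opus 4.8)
The plan is to eliminate $\UUU'$ and recast (\ref{2_35})--(\ref{3_35}) as a shooting problem for the free parameter $\g$. By the symmetry $a_{ij}=a_{ji}$ together with (\ref{1_35}), for every value of $\UUU$ the matrix $A(\UUU)$ is symmetric and positive definite with eigenvalues in $[m,M]$; hence it is invertible and $A^{-1}(\UUU)$ is symmetric with eigenvalues in $[1/M,1/m]$, so that
\begin{equation*}
\frac{1}{M}|\xi|^2\le \xi\cdot A^{-1}(\UUU)\xi\le\frac{1}{m}|\xi|^2\quad\hbox{for all}\ \xi\in\RR.
\end{equation*}
Solving (\ref{2_35}) for the derivative, the problem is equivalent to the initial value problem $\UUU'=A^{-1}(\UUU)\g$, $\UUU(0)=(0,0)$, in which $\g\in\RR$ plays the role of a parameter; the task becomes to choose $\g$ so that the corresponding solution reaches $\uuu^*$ at $z=1$.

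First I would fix $\g$ and solve this initial value problem on $[0,1]$. Since $|\UUU'|=|A^{-1}(\UUU)\g|\le\frac{1}{m}|\g|$, no solution can escape to infinity, so by this a priori bound a solution exists on the whole interval $[0,1]$ (assuming the $a_{ij}$ regular enough that the Cauchy problem is well posed). Denote by $\UUU(z;\g)$ this solution and define the shooting map $\Phi:\RR\to\RR$ by $\Phi(\g)=\UUU(1;\g)$. Continuous dependence of the solution on the parameter makes $\Phi$ continuous.

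The key is a coercivity estimate. Integrating and using the lower bound for $A^{-1}$,
\begin{equation*}
\g\cdot\Phi(\g)=\int_0^1\g\cdot\UUU'(z)\,dz=\int_0^1\g\cdot A^{-1}(\UUU(z;\g))\g\,dz\ge\frac{1}{M}|\g|^2.
\end{equation*}
Consider then $\Psi(\g)=\Phi(\g)-\uuu^*$. For $|\g|=R$ we have $\g\cdot\Psi(\g)\ge\frac{1}{M}|\g|^2-|\uuu^*|\,|\g|=R\bigl(\frac{R}{M}-|\uuu^*|\bigr)$, which is strictly positive as soon as $R>M|\uuu^*|$. By the classical consequence of Brouwer's fixed point theorem --- a continuous map $\Psi:\overline{B_R}\subset\RR\to\RR$ with $\g\cdot\Psi(\g)\ge 0$ on the sphere $|\g|=R$ has a zero in $\overline{B_R}$ --- there exists $\g$ with $\Phi(\g)=\uuu^*$. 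The pair $(\UUU(\cdot;\g),\g)$ then solves (\ref{2_35})--(\ref{3_35}), which proves the theorem.

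I expect the main obstacle to be the well-posedness and continuity of the shooting map $\Phi$ under the sole hypotheses (\ref{1_35}) and symmetry. If the $a_{ij}$ are merely continuous, Peano's theorem gives existence but not uniqueness, so $\Phi$ could be multivalued; one then either strengthens the hypothesis to Lipschitz (or $C^1$) coefficients, consistent with the regularity tacitly used elsewhere in the paper, or regularizes the $a_{ij}$ by smooth matrices $A_\e$ still satisfying (\ref{1_35}), solves the approximate problems, and passes to the limit with the aid of the uniform bound $|\UUU_\e'|\le\frac{1}{m}|\g|$ and the Ascoli--Arzel\`a theorem. The coercivity estimate above is stable under all these approximations, so the surjectivity conclusion survives the limit.
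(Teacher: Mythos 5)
Your proof is correct, but it takes a genuinely different route from the paper. The paper eliminates $\g$ rather than $\UUU$: integrating $\UUU'=A^{-1}(\UUU)\g$ over $[0,1]$ gives $\uuu^*=\bigl(\int_0^1 A^{-1}(\UUU(t))\,dt\bigr)\g$, and the same spectral bound you use for coercivity (eigenvalues of $A^{-1}$ in $[1/M,1/m]$) shows that the averaged matrix $\int_0^1 A^{-1}(\UUU(t))\,dt$ is positive definite, hence invertible; this turns the two-point problem into the fixed point equation $\UUU(z)=\bigl(\int_0^z A^{-1}(\UUU(t))\,dt\bigr)\bigl(\int_0^1 A^{-1}(\UUU(t))\,dt\bigr)^{-1}\uuu^*$, which is solved by Schauder's theorem in $C^0([0,1])$ using the a priori bounds on $\UUU$ and $\UUU'$ and Arzel\`a--Ascoli compactness. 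The essential difference in cost is regularity: the paper's fixed-point operator is continuous on $C^0$ as soon as the $a_{ij}$ are continuous, so uniqueness for the Cauchy problem is never needed --- exactly the obstacle you correctly flag for your shooting map $\Phi$, which forces you either to strengthen the hypotheses to Lipschitz coefficients or to run the mollification-and-limit argument. That fallback does work as you sketch it: mollification preserves symmetry and the bounds (\ref{1_35}) with the same $m$, $M$; at an approximate solution your coercivity estimate gives $|\g_\e|\le M|\uuu^*|$, hence $|\UUU_\e'|\le M|\uuu^*|/m$, and Ascoli--Arzel\`a lets you pass the integral equation to the limit. What your approach buys in exchange is that the topological step is finite-dimensional --- Brouwer, via the boundary condition $\g\cdot\Psi(\g)\ge 0$ on $|\g|=R$ in $\RR$, instead of Schauder in a function space --- and it yields the explicit quantitative bound $|\g|\le M|\uuu^*|$, which in the paper's argument is left as an unnamed constant.
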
  

\begin{proof}
By (\ref{1_35}) the matrix $A$ is invertible. Thus we can rewrite the system (\ref{2_35}), (\ref{3_35}) as follows

\begin{equation}
\label{2_36}
 \UUU'(z)=(A^{-1}\UUU(z))\g,\quad  \UUU(0)=(0,0),\quad \UUU(1)=\uuu^*.
\end{equation}
Let $(\UUU(z),\g)$ be any solution of (\ref{2_36}). Integrating (\ref{2_36}) from $0$ to $1$ we have

\begin{equation}
\label{3_36}
\uuu^* =\left(\int_0^1 A^{-1}\UUU(t) dt\right)\g.
\end{equation}
We wish to solve (\ref{3_36}) with respect to $\g$. To this end we study the invertibility of the matrix

\begin{equation*}
\int_0^1 A^{-1}\UUU(t) dt.
\end{equation*}
Let $\l_m(\UUU)$ and $\l_M(\UUU)$ be the eigenvalues of $A(\UUU)$. By (\ref{1_35}) we have

\begin{equation*}
0<m\leq\l_m(\UUU(z))\leq\l_M(\UUU(z))\leq M.
\end{equation*}
Denote by $e_{ij}(\UUU)$ the elements of $A^{-1}(\UUU)$. By (\ref{1_35}) we have, for all $\xi\in\RR$,

\begin{equation}
\label{5_37}
\sum_{ij=1}^2 e_{ij}(\UUU)\xi_i\xi_j\geq\frac{|\xi|^2}{\l_M(\UUU)}\geq\frac{|\xi|^2}{M}.
\end{equation}
Therefore

\begin{equation*}
\sum_{ij=1}^2\Bigl(\int_0^1 e_{ij}(\UUU(t))dt\Bigl)\xi_i\xi_j\geq\frac{|\xi|^2}{M}.
\end{equation*}
Thus $\int_0^1 A^{-1}(\UUU(t))dt$ is invertible and we obtain $\g$ as a vector functionally dependent on $\UUU(z)$ i.e

\begin{equation*}
\g[\UUU]=\Bigl(\int_0^1 A^{-1}(\UUU(t))dt\Bigl)^{-1}\uuu^*.
\end{equation*}
Hence the problem (\ref{2_35}), (\ref{3_35}) can be rewritten as

\begin{equation}
\label{2_39}
\UUU(z)=\Bigl(\int_0^z A^{-1}(\UUU(t))dt\Bigl)\Bigl(\int_0^1 A^{-1}(\UUU(t))dt\Bigl)^{-1}\uuu^*.
\end{equation}
To prove that (\ref{2_39}) has at least one solution we use the Schauder's fixed point theorem. Since $m$ and $M$ do not depend on $\UUU$, $\g[\UUU]$ is also bounded by a constant $C$. On the other hand, also the norm of the matrix

\begin{equation*}
\int_0^z A^{-1}(\UUU(t))dt,\quad z\in[0,1]
\end{equation*} 
can be majorized by a constant $C_1$ depending only on the data. Hence we have the a priori estimate

\begin{equation*}
\pp \UUU(z)\pp_{C^0([0,1])}\leq C_2.
\end{equation*} 
By (\ref{2_36}) also $\UUU'(z)$ is a priori bounded in the $C^0([0,1])$-norm. Let

\begin{equation*}
\bb=\{\UUU(z);\UUU(z)\in C^0([0,1]),\UUU(0)=0, \UUU(z)=\uuu^*,\pp\UUU\pp_{C^0([0,1])}\leq C_2\}.
\end{equation*} 
$\bb$ is closed, bounded and convex subset of $C^0([0,1])$. We define in $\bb$ the operator $T[\UUU]$

\begin{equation*}
T[\UUU](z)=\Bigl(\int_0^z A^{-1}(\UUU(t))dt\Bigl)\Bigl(\int_0^1 A^{-1}(\UUU(t))dt\Bigl)^{-1}\uuu^*.
\end{equation*} 
We have $T(\bb)\subseteq\bb$. Moreover, if $\UUU\in\bb$, we have $\pp T[\UUU]'(z)\pp_{C^0([0,1])}\leq C_3$. Thus by Arzela's theorem $T$ is a compact operator. We conclude that $T$ has at least a fixed point which gives a solution of the problem (\ref{1_6})-(\ref{3_6}).
\end{proof}

\section{Functional solutions for a fluid motion obeying the Darcy's law}
Let $\O$, $\G_1$, $\G_2$ and $\G_3$ be as in Section 2 and 3. We treat here the case of a macroscopic and molecular motion of an incompressible fluid occurring in the porous medium $\O$. The motion obeys the Darcy's law \cite{JB}

\begin{equation}
\label{1_42}
\vvv=-K\na p,
\end{equation} 
where $p$ is the pressure and $\vvv$ the velocity. For the heat and mass flow densities $\qqq_h$ and $\qqq_m$ we have, taking into account the Soret and Dofour effects \cite{NB},

\begin{equation*}
\qqq_h=a_{11}\na u_1+a_{12}\na u_2+u_1\vvv,\quad  \qqq_m=a_{21}\na u_1+a_{22}\na u_2+u_2\vvv,
\end{equation*}
where $u_1(\xxx)$, $u_2(\xxx)$ are the temperature and concentration and the $a_{ij}$ are assumed to be given functions of $u_1$, $u_2$ and $p$. From the conditions

\begin{equation*}
 \na\cdot\qqq_h=0,\quad \na\cdot\qqq_m=0,\quad\na\cdot\vvv=0,
\end{equation*}
using (\ref{1_42}), we arrive at the problem

\begin{equation}
\label{4_43}
\na\cdot\big[a_{11}(u_1,u_2,p)\na u_1+a_{12}(u_1,u_2,p)\na u_2+u_1K(u_1,u_2,p)\na p\big]=0\quad\9
\end{equation}

\begin{equation}
\label{7_43}
u_1=0\ \1,\ \frac{\pa u_1}{\pa n}=0\ \2,\ u_1=u_1^*\ \3
\end{equation}

\begin{equation}
\label{5_43}
\na\cdot\big[a_{21}(u_1,u_2,p)\na u_1+a_{22}(u_1,u_2,p)\na u_2+u_2K(u_1,u_2,p)\na p\big]=0\quad\9
\end{equation}

\begin{equation}
\label{8_43}
u_2=0\ \1,\ \frac{\pa u_2}{\pa n}=0\ \2,\ u_2=u_2^*\ \3
\end{equation}

\begin{equation}
\label{6_43}
\na\cdot(K(u_1,u_2,p)\na p)=0
\end{equation}

\begin{equation}
\label{9_43}
p=0\ \1,\ \frac{\pa p}{\pa n}=0\ \2,\ p=p^*\ \3.
\end{equation}
The problem (\ref{4_43})-(\ref{9_43}) is a special case of the following more general problem \cite{GC}, \cite{GC1}

\begin{equation}
\label{1_44}
\na\cdot\Bigl[\sum_{j=1}^n a_{ij}(u_1,...u_n,p)\na u_j+b_i(u_1,....u_n,p)\na p\Bigl]=0\quad\9,\quad i=1....n
\end{equation}

\begin{equation}
\label{2_44}
\na\cdot\Bigl[b_{n+1}(u_1,....u_n,p)\na p\Bigl]=0\quad\9
\end{equation}

\begin{equation}
\label{3_44}
u_i=0\ \1,\ \frac{\pa u_i}{\pa n}=0\ \2, u_i=u_i^*\ \3, \ i=1....n
\end{equation}

\begin{equation}
\label{4_44}
 p=0\ \1,\  \frac{\pa p}{\pa n}=0\2,\ p=p^*\ \3,
\end{equation}
where now $\O$ is an open and bounded subset of $\RN$ and $\G_1$, $\G_2$ and $\G_3$ are defined as in the case of Section 2. Moreover, $u_i^*$, $p^*$ are given constants. To treat the problem (\ref{1_44})-(\ref{4_44}) from the point of view of functional solutions we assume, in this section, a perspective slightly different from that of Section 2. There the equation and the boundary conditions  (\ref{1_5}) determining the ``pivot'' $z(\xxx)$ were artificially added to the problem. Now we take as ``pivot'' the pressure $p$ i.e. one of the unknown of the problem(\ref{1_44})-(\ref{4_44}). Again with the goal to separate the geometric part of the problem from the nonlinear part. To this end we give the following new definition.

\begin{definition}
A solution $(u_1(\xxx),...,u_n(\xxx),p(\xxx))$ of problem (\ref{1_44})-(\ref{4_44}) is termed ``functional'' if $n$ functions $U_1(p),...,U_n(p)$, each of class $C^1([0,p^*])$, exist such that

\begin{equation*}
u_i(\xxx)=U_i(p(\xxx)),\quad i=1...n.
\end{equation*}
\end{definition}
Clearly, the set $\cc_F$ of functional solutions is a subset of the set $\cc$ of all classical solutions. It is unclear to the writer if the vice-versa is also true. The two point problem associated with (\ref{1_44})-(\ref{4_44}) is

\begin{equation}
\label{1_47}
\sum_{j=1}^n a_{ij}(U_1,...,U_n,p)\frac{dU_j}{dp}+b_i(U_1,...,U_n,p)=\g_i b_{n+1}(U_1,...,U_n,p),\ i=1..n
\end{equation}

\begin{equation}
\label{2_47}
U_i(0)=0,\quad U_i(p^*)=u_i^*.
\end{equation}
The following theorem relates the solutions of the boundary problem (\ref{1_44})-(\ref{4_44}) to the solutions of the two-point problem (\ref{1_47}), (\ref{2_47}). For the proof we refer to \cite{GC}.

\begin{theorem}
Let $a(U_1,...,U_n,p)$, $b_i(U_1,...,U_n,p)$ and $b_{n+1}(U_1,...,U_n,p)$ belong to $C^0([\RN\times\R)$ and
\begin{equation}
\label{3_47}
b_{n+1}(U_1,...U_n,p)\geq b_0\geq 0\quad \hbox{for all}\ (U_1,...,U_n)\in\RN,\ \hbox{and for all}\ p\in\R.
\end{equation}
Then to every solution $(U_1,(p)...,U_n(p))$, $(\g_1,...,\g_n)$ of the problem (\ref{1_47}), (\ref{2_47}) there corresponds a solution of the problem (\ref{1_44})-(\ref{4_44}) belonging to $\cc_F$. Vice-versa for every functional solutions of (\ref{1_44})-(\ref{4_44}) we have a solution of the two point problem (\ref{1_47}), (\ref{2_47}).
\end{theorem}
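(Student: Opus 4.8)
The plan is to mirror the argument of Theorem 2.2, the only structural difference being that the ``pivot'' is now the pressure $p$ itself, obtained by solving the decoupled equation (\ref{2_44}) after the substitution $u_i=U_i(p)$. Throughout I abbreviate $\hat a_{ij}(p):=a_{ij}(U_1(p),\dots,U_n(p),p)$, $\hat b_i(p):=b_i(U_1(p),\dots,U_n(p),p)$ and $\psi(p):=b_{n+1}(U_1(p),\dots,U_n(p),p)$, all continuous functions of the single variable $p$, and I set $B(p)=\int_0^p\psi(s)\,ds$.

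\emph{Forward direction.} Given a solution $(U_1,\dots,U_n,\g_1,\dots,\g_n)$ of (\ref{1_47}), (\ref{2_47}), I would first produce the pivot. Assuming the lower bound in (\ref{3_47}) is strict, $\psi\ge b_0>0$, so $B$ is a $C^1$ diffeomorphism of $[0,p^*]$ onto $[0,B(p^*)]$. Let $z(\xxx)$ solve (\ref{2_3}) and set $p(\xxx)=B^{-1}\big(B(p^*)\,z(\xxx)\big)$. Then $w:=B(p)=B(p^*)z$ is harmonic, so $\na\cdot[\psi(p)\na p]=\D B(p)=0$, and the boundary conditions (\ref{4_44}) for $p$ follow from those of $z$ exactly as in Theorem 2.2 (on $\G_2$ because $\pa p/\pa n=(B^{-1})'\,\pa w/\pa n=0$). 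Now put $u_i(\xxx)=U_i(p(\xxx))$; the conditions (\ref{3_44}) follow from $U_i(0)=0$, $U_i(p^*)=u_i^*$ and the chain rule on $\G_2$. Substituting $\na u_j=U_j'(p)\na p$ into the flux of (\ref{1_44}) gives $\big[\sum_j\hat a_{ij}U_j'+\hat b_i\big]\na p$, which by the two-point equation (\ref{1_47}) equals $\g_i\psi(p)\na p$; hence the left-hand side of (\ref{1_44}) is $\g_i\,\na\cdot[\psi(p)\na p]=0$. Thus $(u_1,\dots,u_n,p)\in\cc_F$.

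\emph{Reverse direction} (the substantive half). Given a functional solution with $u_i=U_i(p)$, $U_i\in C^1([0,p^*])$, I would reproduce the device of (\ref{1_14})--(\ref{4_15}). By (\ref{3_47}) and the maximum principle, $0\le p(\xxx)\le p^*$; since $p$ is continuous on the connected $\bar\O$ with $p=0$ on $\G_1$ and $p=p^*$ on $\G_3$, its range is all of $[0,p^*]$. Define $\Theta_i(p)=\int_0^p\big[\sum_j\hat a_{ij}(s)U_j'(s)+\hat b_i(s)\big]\,ds$ and $B$ as above. Because $\na u_j=U_j'\na p$, equation (\ref{1_44}) reads $\na\cdot[\Theta_i'(p)\na p]=0$, whence $\D\Theta_i(p(\xxx))=0$; likewise $\D B(p(\xxx))=0$ from (\ref{2_44}). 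Both $\Theta_i(p(\xxx))$ and $B(p(\xxx))$ vanish on $\G_1$, have zero normal derivative on $\G_2$, and are constant ($=\Theta_i(p^*)$ resp. $=B(p^*)$) on $\G_3$; by uniqueness of the mixed Laplacian problem (\ref{2_3}) each equals its $\G_3$-value times $z(\xxx)$. Dividing, $\Theta_i(p(\xxx))=\g_i\,B(p(\xxx))$ with $\g_i:=\Theta_i(p^*)/B(p^*)$ (here $B(p^*)>0$ since $\psi\ge b_0>0$). As $p(\xxx)$ sweeps all of $[0,p^*]$, this forces $\Theta_i(p)=\g_i B(p)$ for every $p\in[0,p^*]$; differentiating yields (\ref{1_47}), and (\ref{2_47}) is immediate.

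\emph{Main obstacle.} The delicate points are (i) the surjectivity of $p$ onto $[0,p^*]$, needed to upgrade the pointwise identity $\Theta_i(p(\xxx))=\g_i B(p(\xxx))$ to an identity of functions on $[0,p^*]$ that may legitimately be differentiated, and (ii) the degenerate case $b_0=0$ allowed by (\ref{3_47}), where $\psi$ may vanish, $B$ fails to be invertible, $B(p^*)$ may be $0$, and the pressure equation is only degenerate-elliptic; then both the construction of the pivot (forward) and the division by $B(p^*)$ (reverse) break down and require either the strict bound $b_0>0$ or a separate existence/comparison argument for the degenerate pressure equation. I expect (ii) to be the genuine difficulty; under the natural hypothesis $b_0>0$ the remaining steps are routine adaptations of Theorem 2.2.
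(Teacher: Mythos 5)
Your proof is correct and takes essentially the route the paper itself intends: the paper defers the proof of this theorem to \cite{GC}, but your forward substitution and your reverse device (the primitives $\Theta_i$ and $B$, reduction to the mixed Laplacian problem, the identity $\Theta_i(p(\xxx))=\g_i B(p(\xxx))$ upgraded to all of $[0,p^*]$ via surjectivity of $p$) are precisely the arguments of the paper's Theorem 2.2 combined with the Kirchhoff transformation used in the proof of Theorem 4.3. Your flagged obstacle is also well taken: the ``$b_0\geq 0$'' in (\ref{3_47}) must be a misprint for $b_0>0$, since the paper's own proof of Theorem 4.3 invokes (\ref{3_47}) to assert that $\Theta$ maps $[0,p^*]$ one-to-one onto $[0,\eta^*]$, which fails if $b_{n+1}$ may vanish.
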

\noindent We have also

\begin{theorem}
Let the same assumptions of Theorem 4.2 hold for on  $a(U_1,...,U_n,p)$, $b_i(U_1,...,U_n,p)$ and $b_{n+1}(U_1,...,U_n,p)$. then every functional solution of problem (\ref{1_44})-(\ref{4_44}) can be expressed in terms of the solution $U_i(p)$ of the corresponding two point problem (\ref{1_47}), (\ref{2_47}) and of the solution $z(\xxx)$ of the mixed problem for the laplacian

\begin{equation}
\label{1_48}
\D z=0\ \9,\quad z=0\ \1,\quad\frac{\pa z}{\pa n}=0\ \2,\quad z=1\ \3.
\end{equation}
\end{theorem}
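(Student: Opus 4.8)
The plan is to show that the pressure $p(\xxx)$ is itself a function of the pivot $z(\xxx)$; composing this with the $U_i$ then exhibits the whole solution through the $U_i$ and $z$ alone. So let $(u_1,\dots,u_n,p)$ be a functional solution, with $u_i(\xxx)=U_i(p(\xxx))$, the $U_i$ being the solution of the two-point problem (\ref{1_47}), (\ref{2_47}) provided by Theorem 4.2. Substituting $u_i=U_i(p)$ into the coefficient of the pressure equation (\ref{2_44}) renders it a function of $p$ alone: writing $\beta(p)=b_{n+1}(U_1(p),\dots,U_n(p),p)$, equation (\ref{2_44}) becomes $\na\cdot[\beta(p(\xxx))\na p]=0$ in $\O$.

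Mimicking the device (\ref{1_14}) from the proof of Theorem 2.1, I would set $\ttt(p)=\int_0^p \beta(s)\,ds$ for $p\in[0,p^*]$. Then $\na\ttt(p(\xxx))=\beta(p(\xxx))\na p$, so the pressure equation gives $\D\ttt(p(\xxx))=0$ in $\O$. The boundary conditions (\ref{4_44}) transfer directly: $\ttt(p)=\ttt(0)=0$ on $\G_1$, $\ttt(p)=\ttt(p^*)$ (a constant) on $\G_3$, while $\frac{\pa}{\pa n}\ttt(p(\xxx))=\ttt'(p)\frac{\pa p}{\pa n}=0$ on $\G_2$. Hence $\ttt(p(\xxx))$ solves the mixed problem for the Laplacian with datum $\ttt(p^*)$ on $\G_3$. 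By the uniqueness and linearity of (\ref{1_48}) — exactly the step that produced (\ref{4_15}) — I conclude $\ttt(p(\xxx))=\ttt(p^*)\,z(\xxx)$ throughout $\bar\O$.

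It then remains to invert $\ttt$. If (\ref{3_47}) holds with $b_0>0$, the integrand $\beta$ is bounded below by $b_0$, so $\ttt$ is a strictly increasing $C^1$ bijection of $[0,p^*]$ onto $[0,\ttt(p^*)]$; since $0\le z(\xxx)\le1$ by the maximum principle, $\ttt(p^*)\,z(\xxx)$ remains in this range and I may write $p(\xxx)=\ttt^{-1}(\ttt(p^*)\,z(\xxx))$. Composing with the two-point solution yields $u_i(\xxx)=U_i\big(\ttt^{-1}(\ttt(p^*)\,z(\xxx))\big)$, which is precisely the asserted representation in terms of the $U_i$ and of $z$.

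The main obstacle is this inversion. The hypothesis (\ref{3_47}) only guarantees $b_0\ge0$, and if $b_{n+1}$ (hence $\beta$) is permitted to vanish, $\ttt$ need not be strictly monotone and the degenerate pressure equation may even forfeit the uniqueness invoked above. I would therefore either strengthen (\ref{3_47}) to $b_0>0$, or, retaining $b_0\ge0$, argue that along the particular solution $\beta(p)>0$ for $p\in(0,p^*)$ so that $\ttt$ is still strictly increasing; absent such an argument the representation can break down at points where $\na p$ degenerates.
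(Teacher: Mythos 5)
Your proof is correct and takes essentially the same route as the paper: the Kirchhoff transformation $\Theta(p)=\int_0^p b_{n+1}(U_1(t),\dots,U_n(t),t)\,dt$ (your $\ttt$), harmonicity of $\Theta(p(\xxx))$, uniqueness for the mixed problem (\ref{1_48}) yielding $\Theta(p(\xxx))=\Theta(p^*)\,z(\xxx)$, and inversion of $\Theta$ to recover $p$ and hence the $u_i=U_i(p)$. Your closing caveat is well founded and in fact corrects two slips in the paper: hypothesis (\ref{3_47}) as printed only demands $b_0\geq 0$, which does not justify the paper's claim that $\Theta$ is one-to-one (the intended assumption is $b_0>0$), and formula (\ref{5_50}) should read $p(\xxx)=\Theta^{-1}(\eta^*z(\xxx))$, exactly as in your version.
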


\begin{proof}
Let $(u_1(\xxx),...,u_n(\xxx),p(\xxx))=(U_1(p(\xxx)),...,U_n(p(\xxx)),p(\xxx))$ be a functional solution of (\ref{1_44})-(\ref{4_44}). We have by (\ref{2_44})

\begin{equation}
\label{1_49}
\na\cdot\Bigl[b_{n+1}(U_1(p(\xxx)),...,U_n(p(\xxx)),p(\xxx))\na p\Bigl]=0\quad\9
\end{equation}

\begin{equation}
\label{2_49}
 p=0\ \1,\  \frac{\pa p}{\pa n}=0\2,\ p=p^*\ \3, \quad i=1....n.
\end{equation}
The problem (\ref{1_49}), (\ref{2_49}) can be solved with the aid  of the Kirchhoff's transformation. More precisely, let us define the function

\begin{equation*}
\eta=\Theta(p)\quad \hbox{where}\quad \Theta(p)=\int_0^p b_{n+1}(U_1(t),...,U_n(t),t)dt.
\end{equation*}
Since $\na\eta=b_{n+1}(U_1(p(\xxx)),...,U_n(p(\xxx)),p(\xxx))\na p$, $\eta(\xxx)$ is a solution of the linear problem

\begin{equation*}
\D \eta=0\ \9,\ \eta=0\ \1,\ \frac{\pa \eta}{\pa n}=0\ \2,\ \eta=\eta^*\ \3,\ \hbox{where}\ \eta^*=\Theta(p^*).
\end{equation*}
$\eta(\xxx)$ is immediately obtained in term of $z(\xxx)$. For, in view of the uniqueness which holds for the problem (\ref{1_48}), we have

\begin{equation}
\label{4_50}
\eta(\xxx)=\eta^* z(\xxx).
\end{equation}
By (\ref{3_47}) $\Theta(p)$ maps one-to-one $[0,p^*]$ onto $[0,\eta^*]$. Therefore, by (\ref{4_50}) we have

\begin{equation}
\label{5_50}
p(\xxx)=\Theta(\eta(\xxx))=\Theta(\eta^*z(\xxx)).
\end{equation}
Thus the functional solution $(u_1(\xxx),...,u_n(\xxx),p(\xxx))=(U_1(p(\xxx)),...,U_n(p(\xxx)),p(\xxx))$ is fully determined.
\end{proof}
The question if $\cc_F=\cc$ can be settled in the following special case of problem (\ref{1_44})-(\ref{4_44}):

\begin{equation}
\label{1_51}
\na\cdot(a(u,p)\na u)=0\quad\9
\end{equation}

\begin{equation}
\label{2_51}
\na\cdot(a(u,p)\na p)=0\quad\9
\end{equation}

\begin{equation}
\label{3_51}
u=0\ \1,\quad\frac{\pa u}{\pa n}=0\ \2,\quad u=u^*\ \3
\end{equation}

\begin{equation}
\label{4_51}
p=0\ \1,\quad\frac{\pa p}{\pa n}=0\ \2,\quad p=p^*\ \3.
\end{equation}

\begin{theorem}
Assume $a(u,p)\in C^0(\RR)$ and

\begin{equation}
\label{0_52}
a(u,p)\geq\a>0.
\end{equation}
The problem (\ref{1_51})-(\ref{4_51}) has one and only one classical solution which is also a functional solution. Therefore, in this case $\cc_F=\cc$.
\end{theorem}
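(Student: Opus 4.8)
The plan is to reduce the system (\ref{1_51})--(\ref{4_51}) to a single quasilinear equation by exploiting the fact that both equations carry the \emph{same} coefficient $a(u,p)$ and that $u$ and $p$ obey boundary conditions of identical type. Assume $p^*\ne 0$ (if $p^*=0$ then $p$ solves a homogeneous mixed problem, $p\equiv 0$, and the system degenerates into a single Kirchhoff equation for $u$). I would first let $(u,p)$ be any classical solution and introduce the combination
\[
w=p^* u-u^* p .
\]
Since the coefficient $a(u(\xxx),p(\xxx))$ is a fixed, continuous, strictly positive function of $\xxx$, and since (\ref{1_51}) and (\ref{2_51}) are linear in $u$ and in $p$ with that common coefficient, $w$ satisfies $\na\cdot(a\na w)=0\9$. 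Reading off the data gives $w=0\1$ (both $u$ and $p$ vanish there), $\frac{\pa w}{\pa n}=0\2$, and on $\G_3$ one computes $w=p^*u^*-u^*p^*=0$. Thus $w$ solves a homogeneous mixed problem for the uniformly elliptic operator $\na\cdot(a\na\,\cdot\,)$.

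The key step is to conclude $w\equiv 0$. I would argue by the energy identity rather than the maximum principle: multiplying $\na\cdot(a\na w)=0$ by $w$ and integrating over $\O$, the boundary term splits into a piece on $\G_1\cup\G_3$, where $w=0$, and a piece on $\G_2$, where $\frac{\pa w}{\pa n}=0$, so it vanishes and one is left with $\int_\O a\,|\na w|^2\,d\xxx=0$. By (\ref{0_52}) this forces $\na w\equiv 0$, hence $w$ is constant, and since $w=0$ on $\G_1$ we obtain $w\equiv 0$. Therefore $p^*u=u^*p$, that is,
\[
u(\xxx)=\frac{u^*}{p^*}\,p(\xxx)\quad\hbox{in}\ \bar\O .
\]
This already yields $\cc_F=\cc$: every classical solution has the form $u=U(p)$ with the \emph{linear} function $U(p)=(u^*/p^*)p\in C^1([0,p^*])$, hence is functional.

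It then remains to prove existence and uniqueness of the pressure $p$. Writing $\l=u^*/p^*$ and substituting $u=\l p$ makes the coefficient depend on $p$ alone, $\tilde a(p):=a(\l p,p)$, and both (\ref{1_51}), (\ref{2_51}) collapse to the single equation $\na\cdot(\tilde a(p)\na p)=0\9$ with $p=0\1$, $\frac{\pa p}{\pa n}=0\2$, $p=p^*\3$. I would solve this exactly as in the proof of Theorem 4.3, via the Kirchhoff transformation: set $\Psi(p)=\int_0^p\tilde a(s)\,ds$, so that $\na\Psi(p)=\tilde a(p)\na p$ and $\Psi(p(\xxx))$ is harmonic with mixed data $0$, $\frac{\pa}{\pa n}=0$, $\Psi(p^*)$. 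By the uniqueness assumed for (\ref{1_48}), $\Psi(p(\xxx))=\Psi(p^*)z(\xxx)$; and since (\ref{0_52}) makes $\Psi$ a strictly increasing $C^1$ bijection of $[0,p^*]$, this determines $p(\xxx)=\Psi^{-1}\bigl(\Psi(p^*)z(\xxx)\bigr)$ uniquely, with $u=\l p$. Conversely this pair is readily verified to solve (\ref{1_51})--(\ref{4_51}), which gives existence; and because the derivation shows every solution must be of this form, it is the only one.

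I expect the genuinely delicate point to be the reduction $u=\l p$, specifically the justification that $w\equiv 0$: one must ensure the solution is regular enough for the energy identity (a classical solution lies in $C^1(\bar\O)\subset H^1(\O)$, which suffices) and that $\G_1$ is nonempty so that the constant is pinned down. Everything downstream is the single-equation Kirchhoff theory already developed in the paper, so the proportionality observation is where the argument really lives.
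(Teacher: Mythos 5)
Your proposal is correct and takes essentially the same route as the paper: the paper defines $\z(\xxx)=u(\xxx)-\frac{u^*}{p^*}p(\xxx)$ (your $w=p^*\z$), derives the same homogeneous mixed problem $\na\cdot(a(u,p)\na\z)=0$, and uses the identical energy identity $\int_\O a(u,p)|\na\z|^2dx=0$ with (\ref{0_52}) to conclude $u=\frac{u^*}{p^*}p$, hence $\cc_F=\cc$. You go somewhat beyond the printed proof by explicitly settling existence and uniqueness of $p$ via the Kirchhoff transformation and by treating the degenerate case $p^*=0$, steps the paper instead leaves to the remark following the theorem, where the reduced two-point problem $\frac{dU}{dp}=\g$, $U(0)=0$, $U(p^*)=u^*$ is solved through Theorem 4.2.
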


\begin{proof}
Let $(u(\xxx),p(\xxx))$ be a classical solution of (\ref{1_51})-(\ref{4_51}). Define

\begin{equation*}
\z(\xxx)=u(\xxx)-\frac{u^*}{p^*}p(\xxx).
\end{equation*}
We claim that $\z(\xxx)=0\ \9$. We have by (\ref{1_51}) and (\ref{2_51})

\begin{equation}
\label{1_53}
\na\cdot(a(u,p)\na \z)=\na\cdot(a(u,p)\na u)-\frac{u^*}{p^*}\na\cdot(a(u,p)\na p)=0
\end{equation}
and

\begin{equation}
\label{3_52}
\z=0\ \1,\quad\frac{\pa \z}{\pa n}=0\ \2,\quad \z=0\ \3.
\end{equation}
Multiplying (\ref{1_53}) by $\z(\xxx)$, integrating by parts over $\O$ and taking into account the boundary condition (\ref{3_52}) we obtain

\begin{equation*}
\int_\O a(u,p)|\na\z|^2dx=0.
\end{equation*}
Hence, by (\ref{0_52}) we have $\z(\xxx)=0\ \9$ and $u(\xxx)=\frac{u^*}{p^*}p(\xxx)$. Thus $(u(\xxx),p(\xxx))$ is the only solution of (\ref{2_51})-(\ref{4_51}) and is a functional solution with

\begin{equation}
\label{4_53}
U(p)=\frac{u^*}{p^*}p.
\end{equation}  
\end{proof}

We note that we may compute the solution of problem (\ref{1_51})-(\ref{4_51}) using Theorem 4.2. In this simple situation the two point problem (\ref{1_47}), (\ref{2_47}) reduces to

\begin{equation*}
\frac{dU}{dp}=\g,\quad U(0)=0,\quad U(p^*)=u^*
\end{equation*}  
whose solution is precisely (\ref{4_53}).

The problem (\ref{1_44})-(\ref{4_44}) can have more than one  functional solution
 as in the following example:

\begin{equation}
\label{1_57}
\na\cdot(\na u_1-u_2\na p)=0\ \quad \9
\end{equation}  

\begin{equation}
\label{2_57}
\na\cdot(\na u_2-u_1\na p)=0\ \quad \9
\end{equation} 

\begin{equation}
\label{3_57}
\na\cdot(\na p)=0\ \quad \9
\end{equation}  

\begin{equation}
\label{4_57}
u_1=0,\quad u_2=0,\quad p=0\quad \1
\end{equation} 

\begin{equation}
\label{5_57}
\frac{\pa u_1}{\pa n}=0,\quad \frac{\pa u_2}{\pa n}=0,\quad \frac{\pa p}{\pa n}=0\quad \2
\end{equation}     

\begin{equation}
\label{6_57}
u_1=0,\quad u_2=0,\quad p=0\quad \3.
\end{equation}   
According to Theorem 4.2 the corresponding two point problem is 

\begin{equation}
\label{1_58}
\frac{dU_1}{dp}=\g_1+U_2,\quad U_1(0)=0,\quad U_1(p^*)=0
\end{equation}   

\begin{equation}
\label{2_58}
\frac{dU_2}{dp}=\g_2-U_2,\quad U_2(0)=0,\quad U_2(p^*)=0
\end{equation}   
which can easily be solved. We find

\begin{equation*}
U_1(p)=\g_1\sin p+\g_2(1-\cos p),\quad U_2(p)=\g_1(\cos p-1)+\g_2\sin p,
\end{equation*}   
\vskip .3 cm where $(\g_1,\g_2)$ is the solution of the linear system

$$
\biggl\{
\begin{array}{ccc}
\g_1\sin p^*+\g_2(1-\cos p^*)=0\\
\g_1(\cos p^*-1)+\g_2\sin p^*=0.\\
\end{array}
$$
\vskip .3 cm
If $1-\cos p^*\ne 0$ the problem (\ref{1_58}), (\ref{2_58}) has only the trivial solution $U_1(p)=0,\quad U_2(p)=0,\quad \g_1=0,\quad \g_2=0 $ and, correspondingly, for the problem (\ref{1_57})-(\ref{6_57}) we have $u_1(\xxx)=0,\quad u_2(\xxx)=0,\quad p(\xxx)=0$. On the other hand, if $1-\cos p^*=0$ we have for (\ref{1_57})-(\ref{6_57}) and (\ref{1_58}), (\ref{2_58}) an infinite number of solutions which are easily computed.

 To show a case of problem (\ref{1_44})-(\ref{4_44}) for which we have uniqueness of functional solutions we use the following

\begin{theorem}
Let $F(U,p)$ be measurable with respect to $p$ and continuous with respect to $U$ in the rectangle $R=\{(p,U);\ \leq p\leq A,\ 0\leq U\leq B\}$. Assume that there exist two functions $q(p)$ and $r(p)$ such that

\begin{equation*}
r(p)\leq F(U,p)\leq q(p)
\end{equation*}

\begin{equation*}
r(p)\geq 0,\quad \int_0^A r(t)dt>0.
\end{equation*}
Then the problem

\begin{equation}
\label{3_62}
\frac{dU}{dp}(p)=\g F(U(p),p),\quad U(0)=0, \quad U(A)=0
\end{equation}
has at least one solution absolutely continuous in $[0,A]$. Moreover, if $F(U,p)$ satisfies a Lipschitz condition in $R$
 with respect to $U$ the solution of (\ref{3_62}) is unique.
\end{theorem}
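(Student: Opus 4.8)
The plan is to absorb the unknown constant $\g$ into a normalization so that the condition at $p=A$ becomes automatic, and then to solve the resulting fixed-point problem. (I read the second endpoint condition as $U(A)=B$, the height of the rectangle $R$; since $F\ge r\ge0$, the alternative $U(A)=0$ admits only the trivial $U\equiv0$, $\g=0$, so $B$ is the intended target value.) Integrating \eqref{3_62} with $U(0)=0$ gives $U(p)=\g\int_0^p F(U(t),t)\,dt$, and imposing $U(A)=B$ forces
\[
\g=\frac{B}{\int_0^A F(U(t),t)\,dt},
\]
which is well defined precisely because $F\ge r$ and $\int_0^A r\,dt>0$ keep the denominator strictly positive. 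Eliminating $\g$ turns \eqref{3_62} into the single fixed-point equation $U=T[U]$, where
\[
T[U](p):=\frac{B\int_0^p F(U(t),t)\,dt}{\int_0^A F(U(t),t)\,dt}.
\]
Every fixed point of $T$ automatically satisfies $U(0)=0$ and $U(A)=B$, and setting $\g=B/\int_0^A F(U,t)\,dt>0$ recovers an absolutely continuous solution of \eqref{3_62}.

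For existence I would invoke Schauder's theorem, exactly in the spirit of the fixed-point argument already used in Section 3. Let $\kk=\{U\in C^0([0,A]):0\le U(p)\le B\}$, a closed, bounded, convex subset of $C^0([0,A])$. Since $F\ge0$ the numerator $\int_0^p F$ is nondecreasing and lies between $0$ and the denominator, so $0\le T[U]\le B$ and $T(\kk)\subseteq\kk$. Compactness follows from the estimate $\abs{T[U](p_2)-T[U](p_1)}\le B\big(\int_{p_1}^{p_2}q(t)\,dt\big)\big/\int_0^A r(t)\,dt$: assuming $q\in L^1([0,A])$, absolute continuity of the integral makes $T(\kk)$ equicontinuous, hence relatively compact by Arzela's theorem. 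Continuity of $T$ comes from the Carath\'eodory hypotheses: if $U_n\to U$ uniformly then $F(U_n(t),t)\to F(U(t),t)$ for a.e.\ $t$ by continuity in $U$, the common majorant $q\in L^1$ lets dominated convergence pass to the limit in both integrals, and the denominators remain bounded below by $\int_0^A r\,dt>0$. Schauder's fixed-point theorem then yields the desired $U$.

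For uniqueness under the Lipschitz hypothesis I would pass to the shooting viewpoint. With $F$ Lipschitz in $U$ (constant $L$), the initial value problem $U'=\g F(U,p)$, $U(0)=0$ has a unique solution $U(\cdot;\g)$ for each $\g$; set $\Phi(\g)=U(A;\g)$. Any solution of the two-point problem must have $\g>0$, since $U'=\g F\ge0$ with $U(A)=B>0=U(0)$. If $(U_1,\g_1)$ and $(U_2,\g_2)$ both solve \eqref{3_62} with $\g_1=\g_2$, uniqueness of the initial value problem gives $U_1=U_2$. If instead $0<\g_1<\g_2$, then $U_1$ is a subsolution of $V'=\g_2F(V,p)$, so $W:=U_2-U_1\ge0$ by comparison; the Lipschitz bound gives $W'+\g_2LW\ge(\g_2-\g_1)r(p)$, and integrating the factor $e^{\g_2Lp}$ yields
\[
W(A)\ge(\g_2-\g_1)e^{-\g_2LA}\int_0^A r(t)\,dt>0,
\]
so $U_2(A)>U_1(A)$, contradicting $U_1(A)=U_2(A)=B$. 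Hence $\g_1=\g_2$ and the solution is unique.

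The main obstacle is the non-standard coupling between the free constant $\g$ and the two imposed endpoint values: the device of dividing by $\int_0^A F$ simultaneously eliminates $\g$ and encodes the condition at $p=A$ into $T$, but it works only because $r\ge0$ and $\int_0^A r\,dt>0$ force the denominator to stay positive. This same positivity is exactly what drives the strict monotonicity of $\Phi$ that closes the uniqueness argument.
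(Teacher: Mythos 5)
Your proposal cannot be checked against an internal proof, because the paper gives none: Theorem 4.5 is stated and the author refers for its proof to the classical literature (Zwirner, Cafiero, Zawischa, Hikosaka-Noboru, Sansone). Judged on its own merits, your argument is correct and essentially reconstructs that classical proof; moreover it uses exactly the device the paper itself employs for Theorem 3.2, namely integrating the equation, solving the endpoint constraint for $\gamma$ (possible because the denominator is bounded below by $\int_0^A r\,dt>0$), and applying Schauder to the normalized operator $T[U](p)=B\int_0^p F(U(t),t)\,dt\,/\int_0^A F(U(t),t)\,dt$. Your two silent repairs of the statement are both justified: the endpoint condition must indeed read $U(A)=B$, since as printed ($U(A)=0$) the hypotheses $F\ge r\ge 0$ and $\int_0^A r\,dt>0$ force the unique, vacuous solution $U\equiv 0$, $\gamma=0$, while all the cited works treat the solution curve through the two points $(0,0)$ and $(A,B)$; and the comparison functions $q,r$ must be summable, without which $T$ is neither well defined nor compact. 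The existence half is complete: invariance of the convex set follows from $0\le\int_0^p F\le\int_0^A F$, equicontinuity from $q\in L^1$ together with the lower bound on the denominator, and continuity of $T$ from dominated convergence under the Carath\'eodory hypotheses. On the uniqueness half, your operative argument is the direct comparison of two solutions, and it closes correctly: $\gamma>0$ is forced by $U(A)=B>0$; equal values of $\gamma$ give equal solutions by Gr\"onwall; and for $\gamma_1<\gamma_2$ the integrating-factor estimate $W(A)\ge(\gamma_2-\gamma_1)e^{-\gamma_2 LA}\int_0^A r\,dt>0$ contradicts $W(A)=0$. Two cosmetic caveats: the step ``$W\ge 0$ by comparison'' deserves the standard one-sided Gr\"onwall argument on any interval where $W<0$ (there $|W|=-W$, so $(We^{-\gamma_2 Lp})'\ge 0$), and the shooting map $\Phi(\gamma)=U(A;\gamma)$ you introduce is both dispensable and slightly delicate, since for large $\gamma$ the initial-value solution may exit the rectangle $R$ before $p=A$; as you never actually use $\Phi$, this does not affect the proof.
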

For the proof of Theorem 4.5 and for the related topics which are closely connected to the present work we refer to \cite{ZW}, \cite{ZW1}, \cite{C}, \cite{HN}, \cite{Z} and to the book \cite{S1} page 105. We apply the above Theorem to prove

\begin{theorem}
If $a(u,p)\in C^0(R)$, $b(u,p)\in C^0(R)$, $a(u,p)>0$ in $R$, $b(u,p)>0$ in $R$ and $\frac{b(u,p)}{a(u,p)}$ is uniformly Lipschitz with respect to $u$ in $R$ then the problem

\begin{equation}
\label{1_60}
\na\cdot(a(u,p)\na u)=0\quad\9
\end{equation}

\begin{equation}
\label{2_60}
\na\cdot(b(u,p)\na p)=0\quad\9
\end{equation}

\begin{equation}
\label{3_60}
u=0\ \1,\quad\frac{\pa u}{\pa n}=0\ \2,\quad u=u^*\ \3
\end{equation}

\begin{equation}
\label{4_60}
p=0\ \1,\quad\frac{\pa p}{\pa n}=0\ \2,\quad p=p^*\ \3
\end{equation}
has one and only one functional solution.
\end{theorem}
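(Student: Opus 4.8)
The plan is to read (\ref{1_60})--(\ref{4_60}) as the case $n=1$, $a_{11}=a$, $b_1=0$, $b_{n+1}=b$ of the general system (\ref{1_44})--(\ref{4_44}), to reduce its functional solutions to a scalar two-point problem by Theorem 4.2, to solve that problem by Theorem 4.5, and finally to reassemble the unique functional solution through the Kirchhoff construction of Theorem 4.3. First I would check the hypotheses of Theorem 4.2: $a$ and $b$ are continuous, and since $b>0$ is continuous on the compact rectangle $R$ it attains a positive minimum $b_0$, so along any functional solution (for which $(U(p),p)\in R$) the coefficient $b_{n+1}=b\geq b_0>0$, i.e. (\ref{3_47}) holds. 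Hence the functional solutions of (\ref{1_60})--(\ref{4_60}) correspond exactly to the solutions $(U(p),\g)$ of the associated two-point problem (\ref{1_47}),(\ref{2_47}), which here collapses to
\[
a(U,p)\frac{dU}{dp}=\g\, b(U,p),\qquad U(0)=0,\qquad U(p^*)=u^*.
\]

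Dividing by $a>0$ turns this into $\frac{dU}{dp}=\g F(U,p)$ with $F:=b/a$, $U(0)=0$, $U(p^*)=u^*$, a scalar two-point problem of the type treated in Theorem 4.5 (with $A=p^*$ and terminal value $u^*$). The next step is to verify the hypotheses of that theorem for $F$ on $R$. As a quotient of continuous functions with nonvanishing denominator, $F$ is continuous, hence measurable in $p$ and continuous in $U$, and by assumption it is uniformly Lipschitz in $u$. Because $a$ and $b$ are continuous and strictly positive on the compact $R$, they have positive minima and finite maxima, so $0<r_0\leq F\leq q_0$ with $r_0:=(\min_R b)/(\max_R a)$ and $q_0:=(\max_R b)/(\min_R a)$; the constants $r(p)\equiv r_0$, $q(p)\equiv q_0$ satisfy $r\geq 0$ and $\int_0^{p^*}r\,dt=r_0\,p^*>0$, and the Lipschitz bound furnishes uniqueness. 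Thus Theorem 4.5 produces one and only one pair $(U(p),\g)$, with $U$ absolutely continuous and, since $F>0$, monotone with values in $[0,u^*]$.

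It remains to pass from $(U,\g)$ back to a functional solution and to show it is unique. Given $U$, the pressure must satisfy $\na\cdot\big(b(U(p),p)\na p\big)=0$ in $\O$ with the data (\ref{4_60}); because $b(U(p),p)\geq b_0>0$ along the solution, Theorem 4.3 applies and fixes $p(\xxx)$ uniquely in terms of the solution $z(\xxx)$ of (\ref{1_48}), namely $p(\xxx)=\Theta^{-1}\big(\Theta(p^*)\,z(\xxx)\big)$ with $\Theta(p)=\int_0^p b(U(t),t)\,dt$, and then $u(\xxx)=U(p(\xxx))$; this establishes existence. For uniqueness, the ``vice-versa'' part of Theorem 4.2 sends any functional solution to a solution of the two-point problem, which by the uniqueness just obtained must be our $(U,\g)$; since $z(\xxx)$ is unique, $p(\xxx)$ and hence $(u,p)$ are determined, so the functional solution is unique.

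The step I expect to be the main obstacle is the faithful reduction to Theorem 4.5 in the presence of the non-homogeneous terminal condition $U(p^*)=u^*$: one must ensure that this value is attained for exactly one $\g$ while the trajectory remains inside $R$, where the bounds $r_0,q_0$ and the Lipschitz constant are valid. This rests on the monotone dependence of the shooting map $\g\mapsto U(p^*;\g)$, guaranteed by $F>0$, together with the a priori confinement $0\leq U\leq u^*$; once these are in hand, the reduction and the subsequent Kirchhoff reconstruction are routine.
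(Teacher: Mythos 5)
Your proposal is correct and takes essentially the same route as the paper: reduce (\ref{1_60})--(\ref{4_60}) via Theorem 4.2 to the scalar two-point problem $\frac{dU}{dp}=\g\,\frac{b(U,p)}{a(U,p)}$, $U(0)=0$, $U(p^*)=u^*$, and settle existence and uniqueness by Theorem 4.5 with $F=b/a$, then transfer the result back to the class $\cc_F$. You merely make explicit what the paper leaves implicit --- the constant bounds $r_0=\min_R b/\max_R a$ and $q_0=\max_R b/\min_R a$ verifying the hypotheses of Theorem 4.5, and the Kirchhoff reconstruction of $p(\xxx)$ via Theorem 4.3 that makes the uniqueness transfer airtight --- which is a tightening of the same argument, not a different one.
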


\begin{proof}
The two point problem associated with (\ref{1_60})-(\ref{4_60}) is

\begin{equation}
\label{2_63}
\frac{dP}{du}=\g\frac{b(u,P)}{a(u,P}
\end{equation}

\begin{equation}
\label{3_63}
U(0)=0,\quad U(p^*)=u^*.
\end{equation}

The Theorem 4.5 is applicable to the two-point problem (\ref{2_63}), (\ref{3_63}) if we define $F(u,p)=\frac{b(u,p}{a(u,p)}$. We conclude that problem (\ref{2_63})-(\ref{3_63}) has one and only one solution and by Theorem 4.2 the same is true for the problem (\ref{1_60})-(\ref{4_60})  in the class of functional solutions $\cc_F$.
\end{proof}

 Note that in (\ref{1_60})-(\ref{4_60}) we can equivalently take as pivot $p$ or $u$. In this second case the relevant two point problem becomes

\begin{equation*}
\frac{dP}{du}=\g\frac{a(u,P)}{b(u,P}
\end{equation*}

\begin{equation*}
P(0)=0,\quad P(u^*)=p^*.
\end{equation*}


\bibliographystyle{amsplain}

\end{document}